\documentclass[letterpaper, 10 pt, conference]{ieeeconf}  
\pdfoutput=1

\usepackage{amsmath} 
\usepackage{amssymb}  
\usepackage{graphicx}

\usepackage{amsthm}
\allowdisplaybreaks

\IEEEoverridecommandlockouts  

\overrideIEEEmargins  



\title{\LARGE \bf
Approximate Optimal Control for Safety-Critical Systems 
\\with Control Barrier Functions
}

\author{Max H. Cohen and Calin Belta 
\thanks{This material is based upon work supported by the National Science Foundation Graduate Research Fellowship Program under Grant No. DGE-1840990. Any opinions, findings, and conclusions or recommendations expressed in this material are those of the author(s) and do not necessarily reflect the views of the National Science Foundation.}
\thanks{The authors are with the Department of Mechanical Engineering, 
        Boston University, 110 Cummington Mall, Boston, MA 02215, United States
        {\tt\small \{maxcohen, cbelta\}@bu.edu}}%
}

\newtheorem{theorem}{Theorem}

\newtheorem{proposition}{Proposition}

\theoremstyle{definition}
\newtheorem{definition}{Definition}

\theoremstyle{definition}
\newtheorem{problem}{Problem}

\theoremstyle{definition}
\newtheorem{assumption}{Assumption}

\theoremstyle{remark}

\begin{document}

\maketitle
\thispagestyle{empty}
\pagestyle{empty}

\begin{abstract}

  Control Barrier Functions (CBFs) have become a popular tool for enforcing set invariance in safety-critical control systems. While guaranteeing safety, most CBF approaches are myopic in the sense that they solve an optimization problem at each time step rather than over a long time horizon. This approach may allow a system to get too close to the unsafe set where the optimization problem can become infeasible. Some of these issues can be mitigated by introducing relaxation variables into the optimization problem; however, this compromises convergence to the desired equilibrium point. To address these challenges, we develop an approximate optimal approach to the safety-critical control problem in which the cost of violating safety constraints is directly embedded within the value function. We show that our method is capable of guaranteeing both safety and convergence to a desired equilibrium. Finally, we compare the performance of our method with that of the traditional quadratic programming approach through numerical examples.
\end{abstract}

\section{INTRODUCTION}\label{sec:intro}
The concept of safety has received much attention in the fields of robotics and controls over the past few years. One of the prime reasons for this is the rise of autonomy for safety-critical systems such as self-driving cars. This has led to the question: how does one formally define what it means to be safe? Informally speaking, one could define safety as something “bad” never happens; however, more formal definitions of safety have been linked to the concept of set invariance \cite{Blanchini99}. A popular technique for enforcing set invariance in safety-critical systems is the Control Barrier Function (CBF) approach \cite{CBF19,AmesTAC17}. These methods typically involve synthesizing a safe controller by embedding set invariance conditions within an optimal control problem. Rather than solving a general constrained optimal control problem however, most papers propose to discretize time and assume a piecewise constant control. If the control system is affine in controls and the cost is quadratic, the problem reduces to solving a quadratic program (QP) at each time step to obtain the optimal control \cite{AmesTAC17,HOCBF}. 

One issue with the QP-based approach is that it operates myopically, that is, the safe control is only a function of the current state \cite{AmesDensityFuncOpt}. While this approach can guarantee local safety at each time step, the satisfaction of the safety constraint is dependent on how frequently the QP is solved \cite{GuangCBF}. A step size too small can induce unnecessary computation whereas a step size too large can result in unsafe behavior. Additionally, the QP may allow trajectories to approach the boundary of the safe set very closely before intervening. Consequently, when the system approaches the boundary of the safe set the QP may become infeasible and the approach fails \cite{CBF_ML}. The feasibility of the QP can be increased by introducing relaxation variables; however, this compromises convergence to the desired equilibrium point which may no longer be guaranteed \cite{JankovicAutomatica18}. Moreover, one must take care when simply merging stabilizing conditions with safety conditions as this can shift the desired equilibrium point of the closed-loop system \cite{CLF_CBF_unstable}.

If the CBF problem is not framed in terms of a QP then one is faced with the task of solving a general constrained optimal control problem. One way to obtain a solution to an optimal control problem is to solve the Hamilton-Jacobi-Bellman (HJB) equation; however, for many systems this involves solving a nonlinear partial differential equation (PDE) which typically does not have a closed-form solution \cite{Liberzon_Opt}. The approach commonly taken is to numerically solve the HJB equation offline to generate a control policy which is then implemented on the system in real time. Along these lines, recent work has proposed using density functions, which are the dual to the value function in optimal control, to enforce safety \cite{AmesDensityDual}. It was shown in \cite{AmesDensityFuncOpt} that CBF constraints can be embedded within the density function and the resulting optimal control problem can be solved with a primal-dual algorithm. This approach addresses the myopic nature of the QP method; however, the solution is obtained by discretizing the state space and solving the HJB PDE offline which is computationally demanding. Other authors proposed using neural networks (NNs) to learn safe control policies subject to CBF constraints \cite{Jyo_CBF}; however, these results don't present stability guarantees and the solution is obtained offline. One issue with offline solutions is that they can become computationally demanding as the complexity of the system increases. Additionally, offline solutions are poorly-suited for safety-critical tasks as they are not robust to uncertainties in the system and environment. Therefore, there is a need for online solutions to the safety-critical optimal control problem. 

Recently, reinforcement learning (RL) inspired methods such as approximate dynamic programming (ADP) have been proposed to approximately solve optimal control problems online (see \cite{ADP_survey1,ADP_survey2} for a survey). These methods utilize an actor-critic structure where the critic learns the optimal value function and the actor learns the optimal control input. This method was used to solve infinite-horizon optimal regulation problems for nonlinear continuous-time systems online in \cite{Vam10} and more recent work has focused on various extensions \cite{Bhasin13,Rushi15,Vam15,RUSHi16,ADP_StaF,ADP_PathPlanning18,Vam_ACC19}.  In most literature the actor and critic are parameratized as NNs and although the solution is obtained online, the computational demands of the NNs may inhibit real-time implementation on physical systems. Because of this, other works have focused on developing computationally efficient approximation methods which are able to approximate functions in a local neighborhood of the current state \cite{ADP_StaF}. These computationally efficient ADP methods have been successfully used in some safety-critical applications such as robot motion planning \cite{ADP_PathPlanning18}; however, designing provably safe ADP controllers for general safety-critical systems is still an open area of research \cite{Vam_ACC19}.

In this paper we present an ADP method to solve the safety-critical optimal control problem online in which safety-invariants are expressed as barrier functions. In Sec. \ref{sec:prelim} we introduce formal notions of safety used in the current literature and formulate the general problem under consideration. In Sec. \ref{sec:reform} we reformulate the traditional problem as an unconstrained optimal control problem and show that the solution to this new problem guarantees satisfaction of the original constraints. Sec. \ref{sec:ADP} provides an ADP solution to the reformulated problem from Sec. \ref{sec:reform} and Sec. \ref{sec:analysis} presents a Lyapunov-based analysis in which the ADP method is shown to guarantee both convergence and safety of this solution. Finally, we provide numerical examples in Sec. \ref{sec:sim} and finish with concluding remarks in Sec. \ref{sec:conclusion}. In comparison to the current QP approach our method: 1) shows improved convergence to a stable equilibrium, 2) has increased feasibility, and 3) is not dependent on discretizing the time.  To the best of our knowledge this is also the first attempt to use CBFs to design provably safe ADP controllers.


\section{PRELIMINARIES AND PROBLEM FORMULATION}\label{sec:prelim}

Throughout this paper we consider affine control systems of the form
\begin{equation}
  \label{eq:dyn}
  \dot{x}(t)=f(x(t))+g(x(t))u(t),\quad x(0)=x_0,
\end{equation}
where $x(t)\in\mathbb{R}^{n}$ denotes the system state, $f\,:\,\mathbb{R}^{n}\rightarrow\mathbb{R}^{n}$ models the system drift, the columns of $g\,:\,\mathbb{R}^{n}\rightarrow\mathbb{R}^{n\times m}$ capture the control directions, $u(t)\in U\subset\mathbb{R}^{m}$ is the control input, and $U$ denotes the control constraint set. Note that the explicit dependence on time will be dropped unless needed for clarity. We assume the functions $f,\,g$ are locally Lipschitz continuous, $f(0)=0$, $f$ is sufficiently smooth and $0<||g(x)||\leq\bar{g}$ with $\bar{g}\in\mathbb{R}_{>0}$ where $||\cdot||$ denotes the 2-norm. 
To formalize the concept of safety we introduce the following:

\begin{definition}[Forward Invariance]
  \label{def:Forward-Invariance}
  Consider a set $C\subseteq\mathbb{R}^{n}$ and initial condition $x(0)=x_{0}$. The set $C$ is \emph{forward invariant} for system \eqref{eq:dyn} if $x_{0}\in C\implies x(t)\in C,\,\forall t\geq0$.
\end{definition}

In the current literature \cite{CBF19,AmesTAC17,HOCBF}, if a set $C$ can be rendered forward invariant, then system \eqref{eq:dyn} is said to be \emph{safe} with respect to $C$. In this paper, we assume that the set\footnote{For a set $C$, the notation $\partial C$ denotes the boundary of $C$ and $\text{Int}(C)$ denotes its interior.} $C$ is described by the superlevel set of a continuously differentiable function $h\,:\,\mathbb{R}^n\rightarrow\mathbb{R}$ \cite{CBF19} such that 
  \begin{subequations}
      \label{eq:safe_set}
      \begin{align}
          C = \{x\in\mathbb{R}^n\,|\,h(x)\geq0\},\\
          \partial C = \{x\in\mathbb{R}^n\,|\,h(x)=0\},\\
          \text{Int}(C) = \{x\in\mathbb{R}^n\,|\,h(x)>0\}.
      \end{align}
  \end{subequations}
  
\begin{definition}[Control barrier function \cite{HOCBF}]
  \label{def:CBF}
  The function $h$ in \eqref{eq:safe_set} is a \emph{control barrier function} (CBF) for system \eqref{eq:dyn} if there exists a class $\mathcal{K}$ function $\alpha$ such that
  \begin{equation} \label{eq:CBF}
      L_{f}h(x)+L_{g}h(x)u+\alpha\left(h(x)\right)\geq0,\quad\forall x\in C,
  \end{equation}
  where $L_{f}h(x)=\frac{dh}{dx}f(x)$ is the Lie derivative of $h$ along $f$.
\end{definition}

\begin{theorem}[\cite{CBF19}]\label{thm:CBF}
Let $C$ be defined as in \eqref{eq:safe_set}. If $h$ is a CBF on $C$ and $\frac{\partial h}{\partial x}(x)\neq0,\,\forall x\in\partial C$ then any Lipschitz continuous controller $u(x)\in K_{cbf}(x)$ for \eqref{eq:dyn}, where
\begin{equation}\label{eq:k_cbf}
    K_{cbf}(x)\triangleq\{u\in U\,|\,L_{f}h(x)+L_{g}h(x)u+\alpha\left(h(x)\right)\geq0\},
\end{equation}
renders $C$ forward invariant.
\end{theorem}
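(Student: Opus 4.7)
The plan is to show forward invariance via a standard comparison-lemma argument applied to $h$ along closed-loop trajectories, which is the approach underpinning the classical proofs in \cite{AmesTAC17,CBF19}. Because $u(x)$ is Lipschitz continuous and $f,g$ are locally Lipschitz, the closed-loop dynamics $\dot{x} = f(x) + g(x)u(x)$ admit a unique solution $x(t)$ on some maximal interval of existence starting from any $x_0 \in C$. Assuming $x(t)$ remains in $C$, I will establish that $h(x(t)) \geq 0$ for all $t$ in this interval; a straightforward bootstrapping (the set of times at which $h \geq 0$ is closed and relatively open in the maximal interval) then extends the conclusion to all $t \geq 0$.

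The core computation is to differentiate $h$ along the closed-loop trajectory and invoke the defining property of $K_{cbf}(x)$:
\begin{equation*}
\dot{h}(x(t)) = L_f h(x(t)) + L_g h(x(t))\, u(x(t)) \geq -\alpha(h(x(t))),
\end{equation*}
which holds pointwise in $t$ by the membership $u(x) \in K_{cbf}(x)$. With this differential inequality in hand, I would introduce the comparison ODE $\dot{y} = -\alpha(y)$ with $y(0) = h(x_0) \geq 0$. Because $\alpha$ is class $\mathcal{K}$, we have $\alpha(0) = 0$, so $y \equiv 0$ is an equilibrium of the comparison ODE and, by monotonicity of the scalar flow, any non-negative initial condition produces a non-negative solution for all forward time. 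The comparison lemma then yields $h(x(t)) \geq y(t) \geq 0$ on the maximal interval, i.e., $x(t) \in C$, which is exactly forward invariance of $C$.

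The role of the regularity condition $\tfrac{\partial h}{\partial x}(x) \neq 0$ for $x \in \partial C$ is to guarantee that $\partial C$ is a genuine $(n-1)$-dimensional manifold and that the CBF condition is non-vacuous on the boundary: if $\nabla h$ vanished at some boundary point, the term $L_g h(x) u$ would collapse and the inequality $L_f h(x) \geq 0$ would have to be verified intrinsically rather than enforced via $u$. Alternatively, this assumption enables an equivalent proof via Nagumo's theorem, since on $\partial C$ the CBF condition reduces to $\nabla h(x) \cdot (f(x) + g(x)u(x)) \geq 0$, meaning the closed-loop vector field points into $C$ (or is tangent) along the inward normal direction.

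The main technical subtlety, and the step I would treat most carefully, is the application of the comparison lemma with a merely continuous class $\mathcal{K}$ function $\alpha$ (which need not be locally Lipschitz), since uniqueness for the comparison ODE is not automatic. This is handled either by noting that even with multiple solutions the minimal solution of $\dot{y} = -\alpha(y)$ starting at $y(0) \geq 0$ remains non-negative (which is all that is needed for the lower bound), or by working with a locally Lipschitz $\alpha$ without loss of generality since any class $\mathcal{K}$ function can be bounded below on compacta by a locally Lipschitz class $\mathcal{K}$ function. Everything else is a direct verification.
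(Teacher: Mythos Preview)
The paper does not actually prove this theorem; it is stated with a citation to \cite{CBF19} and no proof is given in the text. Your proposal reproduces the standard comparison-lemma argument (with the Nagumo alternative) that underlies the proof in \cite{AmesTAC17,CBF19}, and it is correct as written, including your careful treatment of the possibly non-Lipschitz class~$\mathcal{K}$ function~$\alpha$.
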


The above theorem illustrates that the existence of a CBF implies the safety of \eqref{eq:dyn}. However, given certain assumptions on $C$, it has been shown that CBFs provide necessary and sufficient conditions for safety, which is formalized through the following theorem:

\begin{theorem}[\cite{CBF19}] \label{thm:FI_iff_CBF}
Let $C$ be a compact set defined by \eqref{eq:safe_set} with the property that $\frac{\partial h}{\partial x}(x)\neq 0,\,\,\forall x\in\partial C$. If there exists a control law $u$ that renders $C$ forward invariant, then $h\,:\,C\rightarrow\mathbb{R}$ is a CBF on $C$.
\end{theorem}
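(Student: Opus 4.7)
The plan is to use the given forward-invariance-rendering feedback $u^{*}$ to exhibit the class $\mathcal{K}$ function $\alpha$ required by the CBF condition, thereby certifying $h$ as a CBF. The strategy has three pieces: extract a boundary bound on $\dot h$ from Nagumo's theorem, lift that bound to all of $C$ via compactness, and dominate the resulting obstacle function by a class $\mathcal{K}$ function.

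First, I would invoke Nagumo's theorem. Because $h\in C^{1}$ and $\partial h/\partial x\neq 0$ on $\partial C$, the boundary is a smooth codimension-one manifold with outward conormal $-\partial h/\partial x$, so forward invariance of $C$ under the (locally Lipschitz) closed-loop vector field $f+gu^{*}$ is equivalent to the subtangential inequality $\rho(x)\triangleq L_{f}h(x)+L_{g}h(x)u^{*}(x)\geq 0$ for every $x\in\partial C$. This is the only place the regularity hypothesis on $\partial h$ is actually used. Next, I would exploit compactness of $C$: let $H\triangleq\max_{x\in C}h(x)<\infty$ and define the non-decreasing obstacle function
\begin{equation*}
\phi(r)=\sup\bigl\{-\rho(x):\,x\in C,\ h(x)\leq r\bigr\},\quad r\in[0,H].
\end{equation*}
Continuity of $\rho$ on the compact set $C$ makes $\phi$ bounded and upper semi-continuous, while the boundary bound yields $\phi(0)\leq 0$, which together with monotonicity forces $\lim_{r\to 0^{+}}\phi(r)=\phi(0)\leq 0$.

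The third step is to select $\alpha\in\mathcal{K}$ with $\alpha(r)\geq\phi(r)$ on $[0,H]$. Since $\phi$ is bounded, non-decreasing, upper semi-continuous, and non-positive in the limit at $0$, a standard smoothing-plus-perturbation construction (e.g., mollify $\max(\phi,0)$, add $\varepsilon r$ to enforce strict monotonicity, and extend arbitrarily for $r>H$) yields such an $\alpha$ with $\alpha(0)=0$, continuous, and strictly increasing. Plugging $u=u^{*}(x)$ into \eqref{eq:CBF} then gives $\rho(x)+\alpha(h(x))\geq\rho(x)+\phi(h(x))\geq 0$ for every $x\in C$, which is exactly what the definition of CBF requires. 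The hard part is this last construction: the existence of a bona fide class $\mathcal{K}$ majorant relies crucially on $\phi(0)\leq 0$ (supplied only by Nagumo, i.e., by the gradient hypothesis on $\partial C$) and on boundedness of $\phi$ (supplied by compactness of $C$); without either, no continuous $\alpha$ vanishing at $0$ could dominate the obstacle near the boundary.
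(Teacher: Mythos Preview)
The paper does not supply a proof of this theorem at all; it is quoted as a background result from \cite{CBF19} and is later invoked inside the proof of Proposition~\ref{prop:1}. Consequently there is no in-paper argument to compare your proposal against.

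That said, your sketch is essentially the standard argument from the cited reference and is sound. Nagumo's condition gives $\rho\geq 0$ on $\partial C$, compactness of $C$ makes the obstacle function $\phi$ finite and right-continuous at $0$ with $\phi(0)\leq 0$, and any class~$\mathcal{K}$ majorant $\alpha\geq\phi$ then certifies the CBF inequality via the chain $\rho(x)+\alpha(h(x))\geq\rho(x)+\phi(h(x))\geq 0$. Two small points worth tightening if you write this out in full: (i) you implicitly need $u^{*}$ continuous so that $\rho$ is continuous on $C$, which is what lets you pass to limits when showing $\phi(0^{+})=\phi(0)$; the theorem statement does not say this, but the Ames et al.\ setting assumes Lipschitz feedback, so it is reasonable. (ii) Your phrase ``upper semi-continuous, together with monotonicity forces $\lim_{r\to 0^{+}}\phi(r)=\phi(0)$'' is slightly off: monotonicity gives the $\geq$ direction, and the $\leq$ direction is exactly the compactness/continuity argument (extract a convergent subsequence of near-maximizers), not upper semi-continuity in the abstract. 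The class~$\mathcal{K}$ majorant construction is indeed routine once $\phi(0^{+})\leq 0$ is in hand.
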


\begin{definition}[Control Lyapunov Function \cite{CBF19}]
\label{def:CLF}
A continuously differentiable function $V_{clf}\,:\,\mathbb{R}^n\rightarrow\mathbb{R}_{\geq0}$ is a \emph{control Lyapunov function} (CLF) for \eqref{eq:dyn} if it is positive definite and satisfies 
\begin{equation}
    \label{eq:CLF2}
    \underset{u\in U}{\inf}\left[L_fV_{clf}(x)+L_gV_{clf}(x)u+\gamma(V_{clf}(x))\right]\leq0, 
\end{equation}
where $\gamma\,:\mathbb{R}_{\geq0}\rightarrow\mathbb{R}_{\geq0}$ is a class $\mathcal{K}$ function.
\end{definition}

\begin{theorem}[\cite{CBF19}]\label{thm:CLF}
Given system \eqref{eq:dyn}, if there exists a CLF $V_{clf}(x)\geq0$ satisfying \eqref{eq:CLF2}, then any Lipschitz continuous feedback controller $u(x)\in K_{clf}(x)$ where 
\begin{multline}
    \label{eq:CLF3}
    K_{clf}(x)\triangleq\{u\in U\,|\,L_fV_{clf}(x)\\+L_gV_{clf}(x)u+\gamma(V_{clf}(x))\leq0\},
\end{multline}
asymptotically stabilizes the system to $x=0$.
\end{theorem}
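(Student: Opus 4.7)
The plan is to invoke $V_{clf}$ itself as a Lyapunov function for the closed-loop system and to apply the standard Lyapunov asymptotic stability theorem. Since $V_{clf}$ is assumed continuously differentiable and positive definite, the only nontrivial ingredient is a sign condition on $\dot V_{clf}$ along closed-loop trajectories; the membership condition defining $K_{clf}(x)$ supplies exactly this condition.

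The first step is to observe that, because $u$ is Lipschitz continuous and $f,g$ are locally Lipschitz, the closed-loop vector field $f(x)+g(x)u(x)$ is locally Lipschitz, so \eqref{eq:dyn} admits a unique absolutely continuous solution from any initial condition; this justifies taking time derivatives of $V_{clf}$ along trajectories. Next, I would compute
\begin{equation*}
  \dot V_{clf}(x)=L_f V_{clf}(x)+L_g V_{clf}(x)u(x),
\end{equation*}
and use the defining inclusion $u(x)\in K_{clf}(x)$ from \eqref{eq:CLF3} to conclude $\dot V_{clf}(x)\leq -\gamma(V_{clf}(x))$ pointwise in $x$.

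The third step is to translate this bound into a sign statement on $\dot V_{clf}$. Since $V_{clf}$ is positive definite we have $V_{clf}(0)=0$ and $V_{clf}(x)>0$ for $x\neq 0$; since $\gamma$ is class $\mathcal{K}$ we have $\gamma(0)=0$ and $\gamma(s)>0$ for $s>0$. Composing, $\gamma(V_{clf}(x))>0$ for all $x\neq 0$, hence $\dot V_{clf}(x)<0$ on $\mathbb R^n\setminus\{0\}$ and $\dot V_{clf}(0)=0$. Thus $V_{clf}$ is a strict Lyapunov function for the closed-loop system, and the classical Lyapunov theorem delivers asymptotic stability of the origin.

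I do not expect a serious obstacle here: the result is essentially a direct unpacking of definitions once well-posedness of the closed-loop ODE is established. The only mild care needed is (i) ensuring $\dot V_{clf}$ is in fact $-\gamma(V_{clf}(x))$ or smaller rather than only satisfying the set-valued infimum condition \eqref{eq:CLF2} (handled by specializing to $u(x)\in K_{clf}(x)$), and (ii) ensuring that the positive definiteness of $V_{clf}$ combined with class $\mathcal K$ properties of $\gamma$ produces the strict inequality $\dot V_{clf}(x)<0$ off the origin rather than mere non-positivity.
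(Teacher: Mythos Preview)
Your argument is correct and is precisely the standard Lyapunov proof one would expect: use $V_{clf}$ as a Lyapunov function, invoke the $K_{clf}$ membership to bound $\dot V_{clf}\le -\gamma(V_{clf}(x))$, and conclude asymptotic stability from the strict negativity of $\dot V_{clf}$ off the origin.

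There is, however, nothing to compare against: the paper does not supply its own proof of this theorem. It is stated as a cited result from \cite{CBF19} and used as background, so no proof appears in the manuscript. Your write-up is a faithful reconstruction of the classical argument underlying that cited result.
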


Now consider the cost functional 

\begin{equation}\label{eq:J}
    J(x,u)\triangleq\int_{0}^{\infty}{r(x(\tau),u(\tau))}d\tau,
\end{equation}

where $r\,:\,\mathbb{R}^n\times\mathbb{R}^m\rightarrow\mathbb{R}_{\geq0}$ is an instantaneous positive definite cost. Consider the following problem:

\begin{problem}
\label{problem:main}
Consider system \eqref{eq:dyn} with initial condition $x_0\in \text{Int}(C)$. Find a control $u\in U$ that drives the system from $x_0$ to the origin, while minimizing \eqref{eq:J} and keeping the system safe. 
\end{problem} 

To solve Problem \ref{problem:main} existing works \cite{AmesTAC17,CBF19,HOCBF} propose to view \eqref{eq:k_cbf} and \eqref{eq:CLF3} as constraints in an optimal control problem. Time is then discretized and the system state is assumed to be fixed at the start of each time interval. Consequently the constraints become linear in the control and, if $r$ is quadratic in $u$, the problem reduces to solving a QP at each time step. This constant control is then applied to the continuous system \eqref{eq:dyn} over the entire time interval and the procedure is repeated at each time step. Specifically, the QP solved is of the form
\begin{subequations}
    \label{eq:CBF_QP_Sim}
    \begin{align}
      \min_{u\in U}\quad & u^{T}Ru+p\varphi^2\\
      \text{s.t.}\quad &   L_{f}h(x)+L_{g}h(x)u +\alpha(h(x)) \geq0,\\
                 \quad & L_{f}V_{clf}(x)+L_{g}V_{clf}(x)u\leq-\gamma (V_{clf}(x))+\varphi,
    \end{align}
  \end{subequations}
  where $\varphi\in\mathbb{R}$ is a relaxation variable which is penalized by $p\in\mathbb{R}_{>0}$, $R\in\mathbb{R}^{m\times m}$ is the control penalty, and $\alpha,\gamma$ are the class $\mathcal{K}$ functions from \eqref{eq:CBF} and \eqref{eq:CLF2}, respectively. The relaxation variable is added to increase the feasibility of the QP which can easily become infeasible in the presence of conflicting control, stability, and safety constraints \cite{CBF_ML}. While increasing feasibility, this relaxation no longer guarantees convergence to the desired equilibrium point \cite{JankovicAutomatica18}. To address these issues we seek a solution to Problem \ref{problem:main} by formulating an optimal control problem whose solution satisfies the control, stability, and safety constraints without relying on the discretization of time. To this end we propose to augment the instantaneous cost $r$ with additional terms whose minimization imply satisfaction of the original constraints.

\section{PROBLEM REFORMULATION AND APPROACH}\label{sec:reform}
Consider Problem \ref{problem:main} with the cost functional in \eqref{eq:J}. Rather than dealing with a constrained problem we seek to reformulate Problem \ref{problem:main} as an unconstrained optimal control problem. To this end we redefine the instantaneous cost as

\begin{equation}
\label{eq:r}
    r(x,\,u)\triangleq x^TQx + R_u(u) + B(x),
\end{equation}
where $Q\in\mathbb{R}^{n\times n}$ is a positive definite matrix which penalizes the state, $R_u\,:\,\mathbb{R}^{m}\rightarrow\mathbb{R}_{\geq0}$ is a positive definite function which penalizes and ensures boundness of the control, and $B\,:\,\text{Int}(C)\rightarrow\mathbb{R}_{\geq0}$ is a barrier-like function that satisfies
\begin{equation}
  \label{eq:RBF}
      \begin{aligned}
      \underset{x\in\text{Int}\left(C\right)}{\inf} B(x)\geq0, \quad & \underset{x\rightarrow\partial C}{\lim}B(x)=\infty, \quad & B(0)=0.
      \end{aligned}
  \end{equation} 
Based on \eqref{eq:safe_set}, a choice of $B$ which satisfies \eqref{eq:RBF} is $B(x)=\frac{s(x)}{h(x)}$ where $s\,:\,\mathbb{R}^n\rightarrow[0,1]$ is a user-defined smooth scheduling function\footnote{It is assumed that the smooth scheduling function is designed such that $s(0)=0$. See \cite{ADP_PathPlanning18} for examples of scheduling functions.} that ensures trajectories are only penalized near $\partial C$. The state penalty matrix $Q$ from \eqref{eq:r} is positive defnite and hence satisfies $\underline{q}\|x\|^2\leq x^TQx\leq\overline{q}\|x\|^2$ with $\underline{q},\,\overline{q}\in\mathbb{R}_{>0}$ for all $x\in\mathbb{R}^n$. Moreover, we assume the control constraint set $U$ is defined by symmetric input constraints such that $U=\{ u\in\mathbb{R}^m\,|\,-\overline{u}\leq u_i\leq \overline{u},\,i=1,...,m\}$, where $u_i$ is the $i$th component of $u$ and $\overline{u}\in\mathbb{R}_{>0}$ is the maximum allowable control. A popular approach to enforcing such control constraints is to use a non-quadratic control cost of the form \cite{Vam15,Bounded_u_pen}
\begin{equation}
\label{eq:Ru}
    R_u(u)\triangleq2\sum_{i=1}^{m}{\int_{0}^{u_i}{\bar{u}r_{i}\tanh^{-1}\left(\zeta_{i}/\bar{u}\right)}d\zeta_{i}},
\end{equation}
where $r_i\in\mathbb{R}_{>0}$ are components that form a diagonal positive definite matrix $R\in\mathbb{R}^{m\times m}$ as $R\triangleq\text{diag}\{\bar{r}\}$ and $\bar{r}\triangleq\left[r_1,\dots,r_m\right]^T$. If the time-horizon is infinite and the system and cost are time-invariant then the optimal value function $V^*\,:\,\mathbb{R}^n\rightarrow\mathbb{R}_{\geq0}$ is also time-invariant and can be expressed as $V^*(x)=\underset{u(\tau)\in U}{\inf}\int_{t}^{\infty}{r(x(\tau),u(\tau))d\tau}$. The associated Hamiltonian is $H\left(x,u,\nabla V^*\right)=L_{f}V^*(x)+L_{g}V^*(x)u+r(x,u)$
which can be used with the stationary condition $\partial H/\partial u=0$ to derive the optimal controller as

\begin{equation}
\label{eq:u*}
    u^*(x)=-\bar{u}\text{Tanh}\left(\frac{R^{-1}g(x)^T}{2\bar{u}}\nabla V^{*}(x)^T\right),
\end{equation}
where $\nabla(\cdot)$ denotes the derivative of $(\cdot)$ with respect to its first argument and $\text{Tanh}(\zeta)\triangleq[\tanh(\zeta_{i}),...,\tanh(\zeta_{m})]^{T},\,\forall\zeta\in\mathbb{R}^m$. 
The optimal value function and controller satisfy the HJB equation
\begin{equation}
\label{eq:HJB}
    0=\underset{u\in U}{\min}\,H=L_{f}V^{*}(x)+L_{g}V^{*}(x)u^{*}+r\left(x,u^*\right),
\end{equation}
with a boundary condition of $V^*(0)=0$. 

\begin{proposition}\label{prop:1}
Let the origin be contained in $C$ and let $x_0\in \text{Int}(C)$. Further, assume that there exists a smooth function $V^*(x)\geq0$ which satisfies \eqref{eq:HJB}. Then, the closed-loop system composed of \eqref{eq:dyn} and controller \eqref{eq:u*} solves Problem \ref{problem:main}.
\end{proposition}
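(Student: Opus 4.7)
The plan is to use $V^*$ itself as a combined Lyapunov/barrier certificate, exploiting the fact that the running cost $r$ has been engineered so that its positive definite state penalty drives the stability conclusion while its barrier term $B$ drives the safety conclusion. First, I would differentiate $V^*$ along the closed-loop trajectories of \eqref{eq:dyn} with $u=u^*$ and invoke \eqref{eq:HJB} to obtain $\dot{V}^*(x)=L_fV^*(x)+L_gV^*(x)u^*(x)=-r(x,u^*(x))$. Since $r(x,u^*)=x^TQx+R_u(u^*)+B(x)\geq\underline{q}\|x\|^2$ with equality only at $x=0$, and $V^*$ is positive definite by construction (with $V^*(0)=0$), this establishes that $V^*$ is a Lyapunov function for the origin and hence that $u^*$ asymptotically stabilizes $x=0$.

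Next, I would address safety via forward invariance of $C$. The key observation is that since $r$ contains the barrier term $B$ with $\lim_{x\to\partial C}B(x)=\infty$, the HJB equation \eqref{eq:HJB} forces $V^*$ to inherit an unboundedness property, i.e., $V^*(x)\to\infty$ as $x\to\partial C$. Combined with the monotonicity $\dot{V}^*\leq 0$, this yields $V^*(x(t))\leq V^*(x_0)<\infty$ for all $t\geq 0$ (with $V^*(x_0)$ finite because $x_0\in\text{Int}(C)$), so the trajectory is trapped inside a sublevel set of $V^*$ that is compactly contained in $\text{Int}(C)$. Hence $x(t)\in\text{Int}(C)\subset C$ for all $t\geq 0$, establishing safety in the sense of Definition~\ref{def:Forward-Invariance}.

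Finally, optimality of $u^*$ with respect to the augmented cost $J$ follows from the standard sufficiency argument for HJB problems: by the stationarity condition used to derive \eqref{eq:u*}, $u^*$ minimizes the Hamiltonian $H(x,u,\nabla V^*)$ pointwise, and combining this with \eqref{eq:HJB}, the boundary condition $V^*(0)=0$, and the asymptotic stability established above gives $J(x_0,u^*)=V^*(x_0)\leq J(x_0,u)$ for any other admissible feedback $u$, via the usual completion-of-squares / telescoping integration of $\tfrac{d}{d\tau}V^*(x(\tau))$.

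The hardest step to make rigorous is the claim that $V^*(x)\to\infty$ as $x\to\partial C$: the smoothness hypothesis in the statement is only useful on $\text{Int}(C)$, and a careful argument is needed to rule out trajectories approaching $\partial C$ asymptotically along a path on which $B$ integrates to a finite value. One way forward is to note that if $x(t)\to\partial C$ then for every $M>0$ there is some $T_M$ with $B(x(t))>M$ for all $t\geq T_M$, forcing $\int_0^\infty r\,d\tau=\infty$ and contradicting $V^*(x_0)<\infty$; alternatively, one can use the HJB equation directly to tie the divergence of $V^*$ near $\partial C$ to the divergence of $B$ there, rather than relying on the integral representation of $V^*$ alone.
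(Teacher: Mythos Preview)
Your proposal is correct and follows essentially the same route as the paper: use $V^*$ as a Lyapunov function, invoke the HJB equation to get $\dot{V}^*=-r(x,u^*)\leq -\underline{q}\|x\|^2$ for stability, and then argue safety by contradiction via the divergence of $V^*$ near $\partial C$. The only item you omit that the paper includes is the one-line check that $u^*\in U$ (immediate from the $\tanh$ saturation in \eqref{eq:u*}); conversely, you supply an explicit optimality argument that the paper leaves implicit, and you are more candid than the paper about the subtlety in justifying $V^*(x)\to\infty$ as $x\to\partial C$.
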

\begin{proof}
By definition $V^*$ is positive definite and satisfies $V^*(0)=0$, making it a suitable Lyapunov function candidate. Taking the derivative of $V^*$ along the trajectories of \eqref{eq:dyn} yields
\begin{equation}\label{eq:V*_dot}
        \dot{V}^*(x)=L_fV^*(x)+L_gV^*(x)u^*(x)\leq-\underline{q}\left\Vert x\right\Vert^2,
\end{equation}
where $L_fV^*(x)=-L_gV^*(x)u^*(x)-r(x,u^*)$ from \eqref{eq:HJB} and $-x^TQx-R_u(u^*)-B(x)\leq\underline{q}\left\Vert x\right\Vert^2$ from \eqref{eq:RBF}, \eqref{eq:Ru} were used. Since $V^*$ was used as a Lyapunov function candidate, it follows from \eqref{eq:V*_dot} and \cite[Theorem 4.1]{Khalil} that the origin is asymptotically stable for \eqref{eq:dyn}. Additionally, $u^*$ maps from $\mathbb{R}^n\rightarrow(-\bar{u},\bar{u})$ thus, the input constraints are satisfied. Now suppose $x_0\in C$ and $C$ is not forward invariant. Then $\exists\, \bar{t}\geq0$ such that $x(\bar{t})\rightarrow\partial C\implies B(x(\bar{t}))\rightarrow\infty\implies V^*(x(\bar{t}))\rightarrow\infty$ which contradicts \eqref{eq:V*_dot}. Thus, $\nexists\,t\geq0$ for which $x\rightarrow\partial C$ so $x_0\in C\implies x\in C,\,\forall t\geq0$ and by Def. \ref{def:Forward-Invariance} $C$ is forward invariant and \eqref{eq:dyn} is safe. Moreover, if $C$ is compact it follows from Theorem \ref{thm:FI_iff_CBF} that $h\,:\,C\rightarrow\mathbb{R}$ is a CBF for \eqref{eq:dyn} over $C$ and $u^*(x)\in K_{cbf}(x)$.
\end{proof}

Proposition \ref{prop:1} illustrates that the solution to the unconstrained infinite-horizon optimal control problem with a cost defined by \eqref{eq:r} solves Problem \ref{problem:main}; however, this is conditioned on solving the HJB equation \eqref{eq:HJB} for $V^*$. Generally speaking, \eqref{eq:HJB} is a nonlinear PDE which cannot be solved analytically. To address this issue, we propose an ADP approach in which the optimal value function is learned online.

\section{APPROXIMATE DYNAMIC PROGRAMMING}\label{sec:ADP}
In the following, we develop a local approximation scheme and online update laws to learn the solution to the HJB equation online. 

\subsection{Value Function Approximation}
Consider the compact set $\chi\subset\mathbb{R}^{n}$ with $x$ in the interior of $\chi$ and let $\Omega(x)$ denote a small compact set centered at the current state $x$. The value function can be represented at points $y\in\Omega(x)$ using state following (StaF) kernels \cite{ADP_StaF,StaF} as
\begin{equation}
    \label{eq:StaF_V_opt}
    V^{*}(y)=W(x)^{T}\sigma\left(y,\,c(x)\right)+\epsilon(x,y),
\end{equation}
where $W\,:\,\chi\rightarrow\mathbb{R}^{L}$ is the continuously differentiable ideal weight function, $\sigma\,:\,\chi\times\chi\rightarrow\mathbb{R}^{L}$ is a vector of $L\in\mathbb{N}$ continuously differentiable bounded positive definite kernel functions, and $c_i(x)\in\chi,\,i=1,\dots,L,$ are the distinct centers of each kernel. The function $\epsilon\,:\,\chi\times\chi\rightarrow\mathbb{R}$ is the function approximation reconstruction error which is assumed to be bounded over $\chi$. Adding and subtracting a bounded version of the barrier-like function \eqref{eq:RBF}, denoted as $\bar{B}\,:\,\text{Int}(C)\rightarrow\mathbb{R}$ where $C$ is the set defined in \eqref{eq:safe_set}, from \eqref{eq:StaF_V_opt}, taking the gradient, and substituting into \eqref{eq:u*} yields an expression for the optimal policy as 
\begin{equation}
    \label{eq:StaF_u_opt}
    u^{*}(y)=-\bar{u}\text{Tanh}\left(\frac{R^{-1}g(y)^T}{2\bar{u}}D^*(y)\right),
\end{equation}
where $D^*(y)\triangleq\nabla\sigma\left(y,\,c\left(x\right)\right)^{T}W\left(x\right) +\nabla W(x)^{T}\sigma(y,\,c(x))+\nabla\epsilon\left(x,y\right)^{T} + \nabla\bar{B}(y)^T$. The addition and subtraction of $\bar{B}$ is made to facilitate the analysis in Sec. \ref{sec:analysis}. If $B$ is chosen as $B(x)=\frac{s(x)}{h(x)}$ then $\bar{B}$ can always be constructed as $\bar{B}(x)=\frac{s(x)}{h(x)+a}$ where $a\in\mathbb{R}_{>0}$ is a positive constant.

In general, the ideal weight function $W$ is unknown a priori and must be replaced with an estimated weight function $\hat{W}(t)\in\mathbb{R}^L$. Similar to most ADP approaches, we maintain separate weight estimates for the value function and optimal policy, denoted as $\hat{W}_c(t),\hat{W}_a(t)\in\mathbb{R}^L$, respectively. Using these estimated weights in the StaF parameterizations of the value function \eqref{eq:StaF_V_opt} and optimal policy \eqref{eq:StaF_u_opt} results in the approximate value function
\begin{equation}
    \label{eq:V_hat}
    \hat{V}(y,x,\hat{W}_{c})\triangleq\hat{W}_{c}^{T}\sigma(y,\,c(x)) + \bar{B}(y),
\end{equation}
and approximate optimal policy
\begin{equation}
    \label{eq:u_hat}
    \hat{u}(y,x,\hat{W}_{a})\triangleq-\bar{u}\text{Tanh}\left(\frac{R^{-1}g(y)^T}{2\bar{u}}\hat{D}(y,x,\hat{W}_a)\right),
\end{equation}
where $\hat{D}(y,x,\hat{W}_a)\triangleq\left(\nabla\sigma(y,\,c(x))^{T}\hat{W}_{a} + \nabla\bar{B}(y)^T\right)$. The notation $\hat{V}(y,x,\hat{W}_c)$ denotes the approximate value function evaluated at $y$, using a kernel centered at $x$, with a weight estimate of $\hat{W}_c$. The expressions for the approximate optimal value function and policy in \eqref{eq:V_hat} and \eqref{eq:u_hat} can then be substituted into \eqref{eq:HJB} to obtain an expression for the approximate HJB equation as 
 \begin{multline}
 \label{eq:BE}
     \hat{H}(y,x,\hat{W}_c,\hat{W}_a)=r(y,\hat{u}(y,x,\hat{W}_a))+L_{f}\hat{V}(y,x,\hat{W}_c)\\+L_{g}\hat{V}(y,x,\hat{W}_c)\hat{u}(y,x,\hat{W}_a),
 \end{multline}
where $\hat{H}\,:\,\mathbb{R}^n\times\mathbb{R}^n\times\mathbb{R}^L\times\mathbb{R}^L\rightarrow\mathbb{R}$ is the approximate Hamiltonian. Taking the difference between the approximate and optimal Hamiltonian as $\delta(y,x,\hat{W}_c,\hat{W}_a)\triangleq\hat{H}(y,x,\hat{W}_c,\hat{W}_a)-H\left(x,u^*,\nabla V^*\right)$ yields the residual approximation error $\delta$, referred to as the Bellman error (BE). From \eqref{eq:HJB}, $H\left(x,u^*,\nabla V^*\right)=0$, thus the BE is just the approximate Hamiltonian. From Proposition \ref{prop:1}, if $\hat{V}\rightarrow V^* $ and $\hat{u}\rightarrow u^*$, then implementing $\hat{u}$ on \eqref{eq:dyn} will solve Problem \ref{problem:main}. Thus, we are faced with the problem of developing estimates of the ideal weights $\hat{W}_c,\hat{W}_a$ that minimize the BE.

\subsection{Online Learning}
In this section we develop online update laws for the estimated weights that ensure convergence to their ideal values. In traditional ADP approaches \cite{Vam10,Bhasin13} a persistence of excitation (PE) condition is required to ensure convergence of the weight estimates; however, this typically involves adding an exploration signal into the system. In addition to degrading performance, the introduction of an exploration signal could compromise safety. More recent works \cite{RUSHi16} have leveraged techniques from concurrent learning adaptive control \cite{Chowdhary} in the form of BE extrapolation which allows the BE to be evaluated at unexplored regions of the state-space. This extrapolation results in a virtual excitation of the system which facilitates weight estimate convergence \cite{RUSHi16}.  To this end, at each time step the BE is extrapolated to a set of points $\{x_{k}(t)\in\Omega(x(t))\,|\,k=1,...,N\}$ about the current state $x(t)$. In the following, let $\delta(t)\triangleq\delta(x(t),x(t),\hat{W}_c(t),\hat{W}_a(t))$ and let the subscript $k$ denote that a function is evaluated at the extrapolated state $x_k(t)$, i.e. $\delta_k(t)\triangleq\delta(x_k(t),x(t),\hat{W}_c(t),\hat{W}_a(t))$. Additionally, let the control $u(t)\triangleq\hat{u}(x(t),x(t),\hat{W}_a(t))$ be the input that drives \eqref{eq:dyn}. For notational brevity, the BE can be expressed more compactly as $\delta(t)=\hat{W}_{c}(t)^{T}\omega(t)+r(x(t),u(t)) + \omega_{B}(t)$
where $\omega(t)\triangleq\nabla\sigma(x(t),c(x(t))(f(x(t))+g(x(t))u(t)$, $\omega_{B}(t)\triangleq \nabla\bar{B}(x(t))(f(x(t))+g(x(t))u(t))$.
To derive an update law for $\hat{W}_c$ consider a squared, normalized version of the BE as $E(t)\triangleq\frac{1}{2}\left(\frac{k_{c1}\delta^{2}(t)}{\rho^{2}(t)} + \sum_{k=1}^{N}{\frac{k_{c2}\delta_{k}^{2}(t)}{N\rho_{k}^{2}(t)}}\right)$ where $k_{c1},\,k_{c2}\in\mathbb{R}_{>0}$ are gains and $\rho$ is a normalization term which is defined as $\rho(t)\triangleq1+\nu\omega(t)^{T}\omega(t)$, where $\nu\in\mathbb{R}_{>0}$ is a gain. An update law is obtained using a gradient descent approach as $\dot{\hat{W}}_c(t) =-\Gamma(t)\frac{\partial E}{\partial \hat{W}_c}(t)$, which yields
\begin{equation}
\label{eq:critic_update}
\begin{aligned}
    \dot{\hat{W}}_c(t) =-\Gamma(t)\left(k_{c1}\frac{\omega(t)}{\rho^2(t)}\delta(t)+\frac{k_{c2}}{N}\sum_{k=1}^{N}\frac{\omega_{k}(t)}{\rho_{k}^2(t)}\delta_{k}(t)\right),
    \end{aligned}
\end{equation}
 where $\Gamma(t)\in\mathbb{R}^{L\times L}$ is a gain matrix that is updated according to 
\begin{equation}
    \label{eq:gamma_update}
    \dot{\Gamma}(t)=\beta\Gamma(t)-\Gamma(t)\left(k_{c1}\Lambda(t) +  \frac{k_{c2}}{N}\sum_{k=1}^{N}\Lambda_k(t) \right)\Gamma(t),
\end{equation}
where $\Lambda(t)\triangleq\frac{\omega(t)\omega(t)^{T}}{\rho^{2}(t)},\,\Lambda_k(t)\triangleq\frac{\omega_{k}(t)\omega_{k}(t)^{T}}{\rho_{k}^{2}(t)}$, and  $\beta\in\mathbb{R}_{>0}$ is a gain. Based on the analysis in Sec. \ref{sec:analysis}, the update law for $\hat{W}_a$ is selected as 

\begin{equation}
    \label{eq:actor_update}
    \dot{\hat{W}}_{a}(t)=\text{proj}\{ -k_{a1}(\hat{W}_a(t)-\hat{W}_c(t))\},
\end{equation}
where $k_{a1}\in\mathbb{R}_{>0}$ is a learning gain and $\text{proj}\{\cdot\}$ is a smooth operator\footnote{Details on the projection operator can be found in \cite{Dixon}.} which bounds the weight estimates. We make the following assumption to ensure weight estimate convergence:

\begin{assumption}[\cite{ADP_StaF}]\label{assumption:PE}
There exists constants $\underline{c}_1,\,\underline{c}_2,\,\underline{c}_3\in\mathbb{R}_{\geq0},\,T\in\mathbb{R}_{>0}$ such that 1) $\underline{c}_1I_L\leq\frac{1}{N}\sum_{k=1}^N\Lambda_k(t)$, 2) $\underline{c}_2I_L\leq\int_{t}^{t+T}\left(\frac{1}{N}\sum_{k=1}^N\Lambda_k(\tau) \right)d\tau,\,\forall t\in\mathbb{R}_{\geq0}$, 3) $\underline{c}_3I_L\int_{t}^{t+T}\left(\Lambda(\tau) \right)d\tau,\,\forall t\in\mathbb{R}_{\geq0}$ where at least one of $\underline{c}_i,\,i=1,2,3$ is strictly positive\footnote{The notation $I_L$ denotes an $L\times L$ identity matrix.}.
\end{assumption}
 If $\lambda_{\min}\{\Gamma^{-1}(0)\}>0$ and Assumption \ref{assumption:PE} holds, \eqref{eq:gamma_update} can be used to show that $\Gamma$ satisfies $\underline{\Gamma}I_{L}\leq\Gamma(t)\leq\overline{\Gamma}I_{L},\,\forall t\in\mathbb{R}_{\geq0}$ where $\underline{\Gamma},\,\overline{\Gamma}\in\mathbb{R}_{>0}$ and $\lambda_{\min}\{(\cdot)\}$ denotes the minimum eigenvalue of $(\cdot)$ \cite[Lemma 1]{ADP_StaF}.  

\section{ANALYSIS}\label{sec:analysis}
To aid in the analysis, we define the ideal weight estimate errors as $\tilde{W_{c}}\triangleq W-\hat{W}_{c}$ and $\tilde{W}_{a}\triangleq W-\hat{W}_{a}$. Now consider the Lyapunov function candidate
\begin{equation}
    \label{eq:Lyap}
    V_{L}(Z,t)\triangleq V^{*}+\frac{1}{2}\tilde{W}_{c}^{T}\Gamma^{-1}\tilde{W}_{c}+\frac{1}{2}\tilde{W}_{a}^{T}\tilde{W}_{a}
\end{equation}
and let $Z\triangleq[x^{T}\ \tilde{W}_{c}^{T}\ \tilde{W}_{a}^{T}]^{T}$. Note that the value function is positive definite, thus the Lyapunov function candidate is positive definite and can be bounded as $\eta_1(||Z||)\leq V_{L}(Z,t)\leq\eta_2(||Z||)$ where $\eta_1,\,\eta_2\,:\,\mathbb{R}_{\geq0}\rightarrow\mathbb{R}_{\geq0}$ are class $\mathcal{K}$ functions \cite[Lemma 4.3]{Khalil}. The sufficient conditions for the following theorem are $\psi k_{c2} >k_{a1},\,\eta^{-1}(\kappa) <\eta_{2}^{-1}(\eta_{1}(\xi)),\,x_0 \in \text{Int}(C)$ where $\xi\in\mathbb{R}_{>0}$ is the radius of the compact set used for value function approximation, $\psi\triangleq\left(\frac{\beta}{2k_{c2}\overline{\Gamma}}+\frac{\underline{c_1}}{2}\right)$, $\kappa\in\mathbb{R}_{>0}$ is a known positive constant that depends on the gains, and $\eta\,:\,\mathbb{R}_{\geq0}\rightarrow\mathbb{R}_{\geq0}$ is a class $\mathcal{K}$ function that satisfies 
\begin{equation*}
    \eta(\left\Vert Z\right\Vert)\leq
\frac{\underline{q}}{2}\left\Vert x\right\Vert^2+\frac{k_{a1}}{8}\left\Vert\tilde{W}_a\right\Vert^2+\frac{k_{c2}}{4}\psi\left\Vert\tilde{W}_c\right\Vert^2.
\end{equation*}

\begin{theorem}[Convergence and Safety] \label{thm:UUB}
  Given system \eqref{eq:dyn} under controller \eqref{eq:u_hat} with update laws \eqref{eq:critic_update}, \eqref{eq:gamma_update}, \eqref{eq:actor_update}, if Assumption \ref{assumption:PE} holds and the sufficient conditions are satisfied then $C$ is forward invariant and the state $x$ and weight estimation errors $\tilde{W}_c$, $\tilde{W}_a$ are uniformly ultimately bounded.
\end{theorem}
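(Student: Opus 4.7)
The plan is to use the composite Lyapunov function $V_L$ in \eqref{eq:Lyap} and show that $\dot V_L \le -\eta(\|Z\|)+\kappa$ on the operating region, which simultaneously delivers uniform ultimate boundedness of $Z=[x^T\ \tilde W_c^T\ \tilde W_a^T]^T$ and, via the unbounded barrier component embedded in $V^*$, forward invariance of $C$. The overall architecture mirrors the Proposition~\ref{prop:1} argument for safety and standard concurrent-learning ADP analyses for convergence, but the two must be closed off together because the StaF parameterization and the extrapolation points only make sense while the trajectory remains in a compact set $\chi\cap\mathrm{Int}(C)$.

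First I would differentiate $V_L$ along the closed-loop system driven by $u=\hat u(x,x,\hat W_a)$, producing three contributions: (i) $\dot V^* = L_f V^* + L_g V^* \hat u$, (ii) a critic term involving $\tilde W_c^T\Gamma^{-1}\dot{\tilde W}_c$ together with $\tfrac12\tilde W_c^T\tfrac{d}{dt}(\Gamma^{-1})\tilde W_c$, and (iii) an actor term $\tilde W_a^T\dot{\tilde W}_a$. In (i), I would add and subtract the optimal-policy term, invoke the HJB \eqref{eq:HJB} to replace $L_f V^*$ by $-r(x,u^*) - L_g V^* u^*$, and bound $\hat u - u^*$ by a Lipschitz/mean-value argument on $\text{Tanh}$ in terms of $\tilde W_a$, $\nabla\epsilon$, and the StaF mismatch $\nabla(W(x)^T\sigma)$; the $\nabla\bar B$ pieces cancel in the leading order because the same $\bar B$ appears in both $\hat V$ and the gradient used in $\hat u$. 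In (ii) and (iii), I would substitute \eqref{eq:critic_update}--\eqref{eq:actor_update}, expand the Bellman error as $\delta = -\tilde W_c^T\omega + \Delta$ with $\Delta$ an $O(\|\epsilon\|+\|\nabla\epsilon\|)$ residual, and use the sandwich $\underline\Gamma I_L\le\Gamma\le\overline\Gamma I_L$ together with Assumption~\ref{assumption:PE} to extract a negative quadratic $-\psi k_{c2}\|\tilde W_c\|^2$. Cross couplings between $x$, $\tilde W_c$ and $\tilde W_a$ would then be absorbed by Young's inequality, where the stated sufficient condition $\psi k_{c2}>k_{a1}$ is exactly what is needed to prevent the actor update from cancelling the critic's negative definiteness.

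After these manipulations one arrives at $\dot V_L \le -\eta(\|Z\|) + \kappa$ with $\eta$ and $\kappa$ as stated in the hypothesis. Standard Lyapunov reasoning (e.g.\ \cite[Thm.~4.18]{Khalil}) then yields UUB of $Z$ with ultimate bound $\eta_2^{-1}(\eta_1(\eta^{-1}(\kappa)))$; the condition $\eta^{-1}(\kappa) < \eta_2^{-1}(\eta_1(\xi))$ ensures that this sublevel set sits inside the compact StaF region $\chi$, so that all bounds used in the derivation remain legitimate throughout the trajectory. For forward invariance, note that $V^*(x(t)) \le V_L(Z(t),t) \le V_L(Z(0),0)<\infty$ for all $t\ge 0$ along any trajectory where $V_L$ is nonincreasing above the residual set, and in general $V^*(x(t))$ stays bounded uniformly in $t$; since $B(x)\to\infty$ as $x\to\partial C$ and $B$ is embedded in $V^*$ through \eqref{eq:r}, the trajectory cannot approach $\partial C$, so $x_0\in\mathrm{Int}(C)\Rightarrow x(t)\in C$ for all $t\ge 0$, exactly as in the proof of Proposition~\ref{prop:1}.

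The main obstacle, in my view, is the intertwining of safety and boundedness: the StaF kernels and BE-extrapolation points $x_k\in\Omega(x)\subset\chi$ are only valid while the trajectory stays in $\chi\cap\mathrm{Int}(C)$, yet the very argument that it stays there relies on $\dot V_L<0$ outside the residual set, which itself uses those StaF bounds. Closing this loop requires showing that the initial sublevel set of $V_L$ is contained in $\chi\cap\mathrm{Int}(C)$ (this is the role of the two hypotheses $x_0\in\mathrm{Int}(C)$ and $\eta^{-1}(\kappa)<\eta_2^{-1}(\eta_1(\xi))$) and then invoking an invariant-set argument so that subsequent trajectories never leave. A second subtlety is the bounded barrier surrogate $\bar B(x)=s(x)/(h(x)+a)$ used inside $\hat V$ and $\hat u$: one must verify that replacing $B$ with $\bar B$ in the approximator contributes only a bounded disturbance (since $\bar B$ and $\nabla\bar B$ are bounded on $\chi$), which can be lumped into $\kappa$, while the unbounded $B$ remains only in the true $V^*$ where it is needed for safety.
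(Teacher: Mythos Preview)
The paper does not actually contain a proof of this theorem: the proof environment reads ``Omitted due to space constraints. Available upon request.'' Hence a line-by-line comparison is impossible. That said, the paper does commit to the Lyapunov candidate \eqref{eq:Lyap}, the class-$\mathcal{K}$ sandwich $\eta_1(\|Z\|)\le V_L\le\eta_2(\|Z\|)$, the specific sufficient conditions $\psi k_{c2}>k_{a1}$, $\eta^{-1}(\kappa)<\eta_2^{-1}(\eta_1(\xi))$, $x_0\in\mathrm{Int}(C)$, and the StaF/BE-extrapolation machinery of \cite{ADP_StaF,RUSHi16}. Your proposal is precisely the analysis those ingredients dictate: differentiate $V_L$, use the HJB identity to convert $\dot V^*$ into $-r(x,u^*)$ plus a policy-mismatch term, expand $\delta=-\tilde W_c^T\omega+\Delta$, use Assumption~\ref{assumption:PE} and the $\Gamma$ bounds to extract $-\psi k_{c2}\|\tilde W_c\|^2$, absorb cross terms via Young's inequality (this is where $\psi k_{c2}>k_{a1}$ enters), arrive at $\dot V_L\le-\eta(\|Z\|)+\kappa$, and invoke \cite[Thm.~4.18]{Khalil} together with the barrier blow-up of $V^*$ for forward invariance. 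This is the standard StaF-ADP argument and is almost certainly what the omitted proof contains; your identification of the circularity between the compact-set assumption on $\chi\cap\mathrm{Int}(C)$ and the Lyapunov decrease, and of the role of the bounded surrogate $\bar B$, is exactly the subtlety the second sufficient condition and the $\bar B$ construction are there to resolve.
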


\begin{proof}
Omitted due to space constraints. Available upon request.
\end{proof}

\section{NUMERICAL EXAMPLES} \label{sec:sim}
In this section we present simulation results which were performed to assess the efficacy of our method and to compare it with the traditional QP approach. In the following, the system is simulated for 25 seconds under the influence of each controller. All differential equations are solved using Matlab's $\mathtt{ode45}$ function and \eqref{eq:CBF_QP_Sim} is solved using Matlab's $\mathtt{quadprog}$ function. Consider a two dimensional single integrator which can be represented as \eqref{eq:dyn} with $x\in\mathbb{R}^2$, $f=0_{2\times1}$, and $g=I_{2}$. The safe set is defined by \eqref{eq:safe_set} with 
  \begin{equation}
      \label{eq:h}
      h(x)\triangleq\sqrt{(x_{1}-z_{1})^{2}+(x_{2}-z_{2})^{2}}-r_h,
  \end{equation}
  where $z=\left[z_{1}\ z_{2}\right]^{T}$ denotes the center of the circular set, and $r_h\in\mathbb{R}_{>0}$ is its radius. For the approximate optimal controller we select the gains as $k_{c1}=0.05,\,k_{c2}=0.75,\,k_{a1}=0.75,\,\nu=1,\,\beta=0.001$. The cost function parameters are set to $Q=I_2,\,R=10I_{2}$ and the controller saturation is $\bar{u}=0.5$. The initial weights for the update laws are selected randomly from a uniform distribution between 0 and 4. The kernel function is defined by $\sigma(x,\,c(x))=\left[x^{T}c_{1}(x)\ x^{T}c_{2}(x)\ x^{T}c_{3}(x)\right]^{T}$ and we select the centers to be at the vertices of an equilateral triangle such that $c_{i}(x)=x+\vartheta(x)d_{i}$ where $\vartheta$ is a scaling factor defined as $\vartheta=\frac{0.5x^{T}x}{1+x^{T}x}$ and $d_{1}=\left[0\ -1\right]^{T}$, $d_{2}=\left[0.866\ -0.5\right]^{T}$, $d_{3}=\left[-0.866\ -0.5\right]^{T}$ are the center offsets.  To facilitate the finite excitation condition for weight convergence the BE is extrapolated to 1 random point from a $0.1\vartheta(x)\times0.1\vartheta(x)$ uniform distribution centered about $x$ at each time step. We select the barrier-like function as $B(x)=\frac{k_{p}s(x)}{h(x)}$ where $s$ is a smooth scheduling function and $k_p\in\mathbb{R}_{>0}$ is a gain. For the QP in \eqref{eq:CBF_QP_Sim} we define the CLF as $V_{clf}(x)=x^{T}Qx$. The functions $\alpha$, $\gamma$, and $p$ act as tuning parameters and are selected as $\alpha(h(x))=h(x)$, $\gamma(V_{clf}(x))=10V_{clf}(x)$, and $p=2$. To ensure results are comparable between methods we select $Q$, $R$, and $\bar{u}$ to be the same as in the ADP case. The results from applying each controller are shown in Fig. \ref{fig:state_space}-\ref{fig:run1Merged}. Fig. \ref{fig:state_space} illustrates each controller's ability to remain in $C$; however, the QP controller is incapable of converging to the origin. This behavior is further illustrated in Fig. ~\ref{fig:run1Merged} which is a result of introducing the relaxation variable $\varphi$ to ensure solvability of the QP. The gains on the CLF and relaxation variable can be tuned in an attempt to achieve better convergence; however, for no finite value of relaxation penalty $p$ can one ensure convergence to the equilibrium point \cite{JankovicAutomatica18}. Fig. \ref{fig:run1Merged} illustrates each controller's ability to satisfy the input and safety constraints.
  
  \begin{figure}
      \centering
      \includegraphics[clip, trim=0.5cm 6.5cm 0.5cm 6.5cm, width=0.48\textwidth]{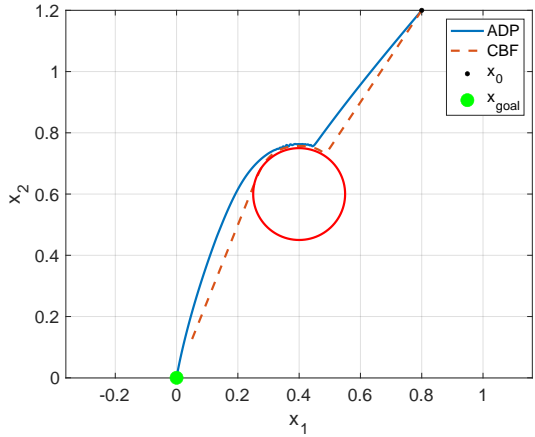}
      \caption{Trajectory of the system under ADP controller and QP controller. The boundary of $C$ is represented by the orange circle, and the origin is represented by a green dot.}
      \label{fig:state_space}
  \end{figure}

  \begin{figure}
      \centering
      \includegraphics[clip, trim=0.5cm 6.5cm 0.5cm 6.5cm, width=0.48\textwidth]{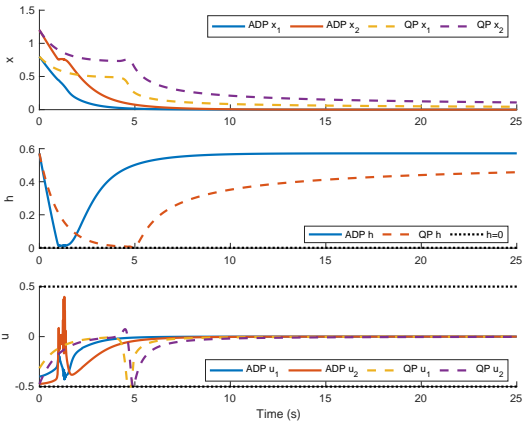}
      \caption{System states under each controller (top). Evolution of the barrier function under each controller (middle). Control trajectory over the course of the simulation (bottom).}
      \label{fig:run1Merged}
  \end{figure}

\section{CONCLUSION}\label{sec:conclusion}
We presented an alternative to the QP-based CBF approach to synthesizing optimal controllers for safety-critical systems. Instead, our method is based on ADP where we incorporate the cost of safety violation directly into the value function of an optimal control problem. We showed that the ADP method is able to guarantee both safety and stability of the resulting closed-loop system. We further illustrated this result with numerical examples in which the ADP controller outperformed the traditional QP controller in terms of convergence and feasibility. Future work will explore extending our approach to uncertain systems and differential games. 


\bibliographystyle{ieeetr}

  
\bibliography{ms}

\end{document}